\newtheorem{theorem}{Theorem}
\newtheorem{lemma}{Lemma}
\newtheorem{definition}{Definition}
\tikzstyle{pinstyle} = [pin edge={to-,thin,black}]
\newcommand{\nat}{\mathbb{N}}
\newcommand{\reals}{\mathbb{R}}
\newcommand{\posreals}{\mathbb{R}_+}
\newcommand{\consi}{\beta_t^i}
\newcommand{\feasconsi}{q_t^i}
\newcommand{\cons}{\beta_t}
\newcommand{\utili}{U^i}
\newcommand{\pricet}{\alpha_t}
\newcommand{\asconsi}{\hat{\beta}_t^i}
\newcommand{\dataset}{\mathcal{D}}
\newcommand{\parweighti}{\mu_t^i}
\newcommand{\gaussN}{\mathcal{N}}
\title{Identifying Coordination in a Cognitive Radar Network - A Multi-Objective Inverse Reinforcement Learning Approach
}
\author{\IEEEauthorblockN{Luke Snow}
\IEEEauthorblockA{\textit{Electrical and Computer Engineering} \\
\textit{Cornell University}\\
Ithaca, NY \\
las474@cornell.edu}
\and
\IEEEauthorblockN{Vikram Krishnamurthy}
\IEEEauthorblockA{\textit{Electrical and Computer Engineering} \\
\textit{Cornell University}\\
Ithaca, NY \\
vikramk@cornell.edu}
\and
\IEEEauthorblockN{Brian M. Sadler}
\IEEEauthorblockA{\textit{dept. name of organization (of Aff.)} \\
\textit{U.S Army Research Laboratory}\\
Adelphi, MD \\
email address or ORCID}
}
\name{Luke~Snow$^{\ \dagger}$,
        Vikram~Krishnamurthy$^{\ \dagger}$,
        and~Brian~M.~Sadler % <-this % stops a space
\sthanks{This research was funded by National Science Foundation grant CCF-2112457,  Army Research office grant W911NF-21-1-0093 , and Air Force Office of Scientific Research grant FA9550-22-1-0016.}}
\address{$^{\dagger}$ Electrical and Computer Engineering, Cornell University, Ithaca, NY\\
$^*$ DEVCOM Army Research Laboratory, Adelphi, MD}
\begin{document}

\maketitle

\begin{abstract}
\begin{comment}
This paper studies the problem of detecting coordination within a network of cognitive radars from a finite data-set of incomplete measurements of radar emissions. Formally, we employ techniques from the micro-economic setting of Revealed Preferences to test for Pareto optimality in the dynamic allocation of power in a cognitive radar network. Furthermore, we show that if the observed data is consistent with Pareto optimality then we can reconstruct feasible utility functions which govern the behavior of each individual radar. We provide numerical results which validate our framework's ability to both test for coordination and reconstruct feasible utility functions.
\end{comment}

Consider a target being tracked by a cognitive radar network. If the target can intercept some radar network emissions, how can it detect coordination among the radars? By 'coordination' we mean that the radar emissions satisfy Pareto optimality with respect to multi-objective optimization over each radar's utility. This paper provides a novel multi-objective inverse reinforcement learning approach which allows for both detection of such Pareto optimal ('coordinating') behavior and subsequent reconstruction of each radar's utility function, given a finite dataset of radar network emissions. The method for accomplishing this is derived from the micro-economic setting of Revealed Preferences, and also applies to more general problems of inverse detection and learning of multi-objective optimizing systems.   
\end{abstract}

\begin{keywords}
Cognitive Radar, Multi-Objective Inverse Reinforcement Learning, Revealed Preferences
\end{keywords}

\vspace{-0.3cm}
\section{Introduction}
\vspace{-0.2cm}
Cognitive radars \cite{haykin2006cognitive} use the perception-action cycle of cognition to sense the target, learn relevant information, then optimally adapt their output emissions in response. We consider the case when there is a \textit{network} of cognitive radars which collaborate to optimally track a target. In a coordinating radar network, not only do the individual cognitive radars optimally adapt their output (with respect to an individual utility function) subject to resource constraints, but also the \textit{allocation of resources} between radars is subject to an optimization procedure. The resource to be allocated is often interpreted as the total power available to the radar network at each time step. Such an optimal power allocation strategy has been studied in \cite{shi2017power}, \cite{panoui2014game}, \cite{chavali2011scheduling}, \cite{bacci2012game}, in which algorithmic game-theoretic methods are employed. Specifically, \cite{shi2017power} poses the problem of adaptive power allocation for radar networks as a cooperative game, and provides an iterative cooperative Nash bargaining algorithm which converges quickly to the Pareto optimal equilibrium.\\
\vspace{-0cm}
In this work we are interested in the \textit{inverse} problem. Suppose "we" are a target being tracked by a group of cognitive radars, and we have a mechanism for intercepting some of their emissions. Given such a set of measured signals, how can we determine if the radars compose a \textit{coordinating} (in the sense of Pareto optimal power allocation) network?  If they are coordinating, can we subsequently reconstruct the individual utility functions which induce each radar's output? This problem is similar to multi-agent inverse reinforcement learning (MAIRL) frameworks \cite{yu2019multi}, \cite{natarajan2010multi}, \cite{lin2019multi} in that we aim to reconstruct each radar's utility function. However, it is fundamentally different in that we first aim to \textit{detect coordination} among the radars; MAIRL approaches \textit{a priori} assume some structure, such as coordination, among the agents. Thus, this problem can be considered at the intersection of inverse game theory \cite{kuleshov2015inverse} and MAIRL. \\
\vspace{-0cm}
In \cite{krishnamurthy2020identifying}, the framework of \cite{afriat1967construction} was exploited to approach this problem for the single-radar case (i.e. detect if a single radar is a constrained utility maximizer). We extend this work to the multi-radar regime by employing
the micro-economic framework of \cite{cherchye2011revealed}. In \cite{cherchye2011revealed} a Revealed Preference methodology is applied to the setting of collective consumption models, and a framework is developed for testing consistency with Pareto optimality given a finite dataset of input-output group responses. Remarkably, conditions are given which are \textit{necessary and sufficient} for a finite dataset to be consistent with Pareto optimality. We map our radar network inverse problem into this framework, thus allowing us to test for Pareto optimal power allocation (coordination) given a finite dataset of (incompletely) measured cognitive radar network emissions. We also show how this can be used to reconstruct feasible utility functions for each observed radar. \\
\vspace{-0cm}
This paper is organized as follows: In section~\ref{sec:RP_analysis} we provide background on the micro-economic framework of \cite{cherchye2011revealed}. In section~\ref{sec:Rad_coord} we introduce the radar network protocol and measurement model, and show how this can be mapped into the framework presented in Section~\ref{sec:RP_analysis}. In section~\ref{sec:num_ex} we provide a numerical example that illustrates the validity of our coordination-testing and utility reconstruction procedures. Finally we conclude in section~\ref{sec:conc}. 

\vspace{-0.3cm}

\section{Coordination Detection and Utility Reconstruction}
\label{sec:RP_analysis}

Our framework for coordination detection and utility function reconstruction (inverse reinforcement learning) is the micro-economic setting of Revealed Preferences. 
Here we present the Revealed Preference framework for the collective consumption setting \cite{cherchye2011revealed}. We present this framework in its original generality, but in section~\ref{sec:Rad_coord} we adapt it to the radar network power allocation setting. 

\vspace{-0.2cm}

\subsection{Agent Consumption}
Consider a group composed of $M$ members, each of which can consume some quantity of $N$ goods. At time $t \in \nat$ each member $i \in \{1,\dots, M\}$ has consumption given by the vector $\consi \in \posreals^N$, and the aggregate group consumption is given by $\cons = \sum_{i=1}^M \consi \in \posreals^N$. It is assumed that the preferences of each member $i$ can be represented by a non-satiated and non-decreasing utility function $\utili(\beta),\ \beta \in \posreals^N$. If the agents choose their individual consumptions in accordance with group Pareto optimality, then we say the group \textit{coordinates}:

\begin{definition}
\label{coll_rat}
A group of M agents with respective utility functions $\{U^i: \reals^N \to \reals\}_{i=1}^M$, and subject to price vector $\pricet$, behaves with \textbf{coordination} if consumption $\{\consi\}_{i=1}^M$ satisfies
\vspace{-0.3cm}
\begin{align}
\begin{split}
\label{eq:coll_rat}
&\sum_{i=1}^M\mu_t^iU^i(\consi) \geq \sum_{i=1}^M\mu_t^iU^i(\zeta^i) \ \  \\&\forall \{\zeta^i \in \reals^N_+ \}_{i=1}^M: \  \pricet'(\sum_i\zeta^i) \leq \pricet'(\sum_i\consi) 
\end{split}
\end{align}
for some set of Pareto weights $\{\parweighti\}_{i=1}^M$
\end{definition}
\begin{comment}
Observe that this condition is equivalent to the maximization of the Pareto-weighted sum $\sum_{i=1}^M\mu_t^iU^i(\consi)$ subject to some cost constraint $\pricet'(\sum_i\consi) \leq M$, and thus corresponds to a Pareto optimal solution. 
\end{comment}
The Pareto weights $\{\parweighti\}_{i=1}^M$ correspond to the bargaining power, or relative importance, of each individual in the formation of the Pareto optimal solution. We now switch perspectives to an empirical analyst who observes the group consumption and attempts to detect coordination, in the sense of Definition~\ref{coll_rat}. 

\subsection{Analyst Detection of Coordination}
For each $t \in \{1,\dots,T\}$, suppose an analyst observes probe signals $\pricet \in \reals^N_+$, aggregate consumption $\cons 
 = \sum_{i=1}^M\consi \in \reals^N$, and "assignable quantities" $\asconsi \leq \consi \ \forall i \in \{1,\dots,M\}$. The assignable quantity $\asconsi$ represents observed consumption of some subset of the total consumed quantities by individual $i$ at time $t$, and hence it is elementwise no greater than the actual consumption vector $\consi$. The apparatus by which these assignable quantities are observed, and the amount observed, varies by application. We denote this dataset as 
 \begin{equation}
 \label{dataset}
    \dataset = \{\pricet, \cons, \{\asconsi\}_{i=1}^M, \ t \in \{1,\dots,T\} \}
 \end{equation}
 We emphasize that the true individual consumption vectors $\consi$ are hidden. Given this dataset, \cite{cherchye2011revealed} provides necessary and sufficient conditions for consistency with coordination, in the sense of Definition~\ref{coll_rat}. To formulate these conditions, we need one more definition:
 For each observation $t$, \textit{feasible personalized quantities} $\feasconsi \in \posreals^N, \ i=1,\dots,M$ satisfy $\feasconsi \geq \asconsi \ \forall i$ and $\sum_{i=1}^M \feasconsi = \cons$.

 Now we provide the main result of \cite{cherchye2011revealed} which states the equivalence between a consistency of a dataset $\dataset$ with coordination and existence of a non-empty feasible region of a set of inequalities.

 \begin{theorem}{\cite{cherchye2011revealed}}
 \label{thm:cherchye1}
    Let $\dataset$ be a set of observations. The following are equivalent:
    \begin{enumerate}
    \item there exist a set of $M$ concave and continuous utility functions $U^1,\dots,U^m$ such that $\dataset$ is consistent with coordination, in the sense of Definition~\ref{coll_rat}.
    \vspace{-0.2cm}
    \item there exist feasible personalized quantities $\feasconsi$ and numbers $u_j^i > 0, \lambda_j^i > 0$ such that for all $s,t \in \{1,\dots,T\}$: 
    \vspace{-0.3cm}
    \begin{equation}
    \label{af_ineq}
        u_s^i - u_t^i \leq \lambda_t^i[\pricet'q_s^i - \pricet'q_t^i]
    \end{equation}
    for each member $i=1,\dots,M$.
    \end{enumerate}
\end{theorem}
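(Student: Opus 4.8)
The plan is to recognize Theorem~\ref{thm:cherchye1} as the collective-consumption generalization of Afriat's theorem, and to prove the equivalence directly in both directions, with the feasible personalized quantities $q_t^i$ playing the role of the hidden individual consumptions $\beta_t^i$ on \emph{both} sides. I would read the numbers $u_t^i$ and $\lambda_t^i$ of condition~(2) as candidate utility levels $U^i(q_t^i)$ and marginal utilities of expenditure for agent $i$ at observation $t$. The whole argument then rests on two classical ingredients: the supporting-hyperplane property of concave functions, and the first-order (KKT) conditions of the Pareto program implicit in Definition~\ref{coll_rat}.

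For the direction (2)$\Rightarrow$(1) I would construct each utility explicitly by the Afriat formula
\begin{equation}
\label{eq:afriat_constr}
U^i(\beta) = \min_{t \in \{1,\dots,T\}}\bigl\{u_t^i + \lambda_t^i\,\alpha_t'(\beta - q_t^i)\bigr\},
\end{equation}
a minimum of affine functions, hence concave and continuous, and non-decreasing since $\lambda_t^i>0$ and $\alpha_t \geq 0$. First I would verify $U^i(q_t^i)=u_t^i$: the term $s=t$ equals $u_t^i$, and applying the inequalities~\eqref{af_ineq} with the indices swapped shows every other term is at least $u_t^i$. Then, choosing the Pareto weights $\mu_t^i = 1/\lambda_t^i$, I would bound any feasible alternative through the affine majorant $U^i(\zeta^i) \leq u_t^i + \lambda_t^i\,\alpha_t'(\zeta^i - q_t^i)$ implied by~\eqref{eq:afriat_constr}; multiplying by $\mu_t^i$ and summing over $i$ makes the products $\mu_t^i\lambda_t^i=1$ cancel, leaving the cross term $\alpha_t'\bigl(\sum_i \zeta^i - \beta_t\bigr)$, which is nonpositive exactly when the budget constraint of Definition~\ref{coll_rat} holds. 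This yields $\sum_i \mu_t^i U^i(\zeta^i) \leq \sum_i \mu_t^i U^i(q_t^i)$, i.e. coordination.

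For the converse (1)$\Rightarrow$(2) I would start from the personalized quantities and Pareto weights witnessing coordination and write the stationarity conditions of the concave program $\max \sum_i \mu_t^i U^i(\zeta^i)$ subject to $\alpha_t'\sum_i \zeta^i \leq \alpha_t'\beta_t$. Non-satiation forces the constraint to be active with a strictly positive multiplier $\rho_t>0$ common to all agents, so that $\rho_t \alpha_t$ is a supergradient of $\mu_t^i U^i$ at $q_t^i$ for every $i$. Setting $u_t^i = U^i(q_t^i)$ and $\lambda_t^i = \rho_t/\mu_t^i > 0$, the supporting-hyperplane inequality $U^i(q_s^i) - U^i(q_t^i) \leq (\rho_t/\mu_t^i)\,\alpha_t'(q_s^i - q_t^i)$ collapses to precisely~\eqref{af_ineq}.

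I expect the main obstacle to be this converse direction, where one must extract a single expenditure price $\rho_t\alpha_t$ that simultaneously supports every agent's utility at the common observation --- the alignment of individual marginal rates forced by Pareto optimality --- and do so cleanly for concave, possibly non-differentiable utilities using supergradients rather than gradients. A secondary point requiring care is the bookkeeping on the hidden data: the personalized quantities $q_t^i$ that witness coordination in (1) must be the \emph{same} feasible quantities appearing in (2), so one must keep the existential quantifier on $\{q_t^i\}$ consistent across both statements; and one should note the constructed $u_t^i$ may be shifted by a common constant to be taken strictly positive, which affects neither concavity nor~\eqref{af_ineq}.
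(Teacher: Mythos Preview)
Your sketch is correct and is precisely the classical Afriat-style argument (construct utilities as a pointwise minimum of affine functions for (2)$\Rightarrow$(1); read off $u_t^i,\lambda_t^i$ from the KKT/supergradient conditions of the Pareto program for (1)$\Rightarrow$(2)). The paper itself does not give a proof but simply defers to Appendix~A.1 of \cite{cherchye2011revealed}, whose argument is exactly the one you outline, so there is nothing further to compare.
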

\begin{proof}
See Appendix A.1 of \cite{cherchye2011revealed}.
\end{proof}
    \vspace{-0.2cm}
    \begin{lemma}
    \label{lem:util}
    Suppose the conditions of Theorem~\ref{thm:cherchye1} hold. Explicit monotone and concave utility functions that rationalize the dataset by satisfying the condition of coordination given in Definition~\ref{coll_rat} are given by
            \begin{equation}
            \label{eq:util}
                \hat{U}^i(\beta) = \min_{t \in \{1,\dots,N\}} \{u_t^i + \lambda_t^i \alpha_t'(\beta - q_t^i)\}
            \end{equation}
   \end{lemma}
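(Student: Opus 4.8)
The plan is to recognize (\ref{eq:util}) as the classical Afriat construction and to verify, in turn, that each $\hat{U}^i$ has the claimed structural properties and then that the feasible personalized quantities from which it is built are Pareto optimal. First I would observe that $\hat{U}^i$ is a pointwise minimum of finitely many affine functions $\beta \mapsto u_s^i + \lambda_s^i \pricet'(\beta - \feasconsi)$ (with $s$ the dummy index of the minimum), so it is automatically concave and continuous. Monotonicity is equally immediate: each affine piece has gradient $\lambda_s^i \pricet$, which is elementwise nonnegative since $\lambda_s^i > 0$ and $\pricet \in \posreals^N$, and a minimum of non-decreasing functions is non-decreasing.

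The heart of the argument is two pointwise facts. The first is that $\hat{U}^i(\feasconsi) = u_t^i$ for every $t$: evaluating (\ref{eq:util}) at $\beta = \feasconsi = q_t^i$, the $s=t$ term contributes exactly $u_t^i$, while each $s \neq t$ term equals $u_s^i + \lambda_s^i \alpha_s'(q_t^i - q_s^i)$, and I would show this is $\geq u_t^i$ by invoking the Afriat inequality (\ref{af_ineq}) with the roles of $s$ and $t$ interchanged, which rearranges to exactly $u_s^i + \lambda_s^i \alpha_s'(q_t^i - q_s^i) \geq u_t^i$. Hence the minimum is attained at the index $t$ and equals $u_t^i$. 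The second fact is the supergradient property: dropping all but the $t$-th term in the minimum gives $\hat{U}^i(\beta) \leq u_t^i + \lambda_t^i \pricet'(\beta - \feasconsi)$ for every $\beta$, so $\lambda_t^i \pricet$ is a supergradient of $\hat{U}^i$ at $\feasconsi$.

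With these two facts I would verify the coordination condition (\ref{eq:coll_rat}) directly, using Pareto weights $\parweighti = 1/\lambda_t^i$ and the reconstructed individual consumptions $\feasconsi$ (recall $\sum_i \feasconsi = \cons$ reproduces the observed aggregate). For any alternative allocation $\{\zeta^i\}$ obeying the budget $\pricet'(\sum_i \zeta^i) \leq \pricet'(\sum_i \feasconsi)$, the supergradient bound multiplied by $\parweighti = 1/\lambda_t^i$ yields $\parweighti \hat{U}^i(\zeta^i) \leq \parweighti u_t^i + \pricet'(\zeta^i - \feasconsi)$, the weight cancelling the $\lambda_t^i$ that multiplies the price term. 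Summing over $i$, the price contributions collapse to $\pricet'(\sum_i \zeta^i - \sum_i \feasconsi) \leq 0$ by the budget constraint, leaving $\sum_i \parweighti \hat{U}^i(\zeta^i) \leq \sum_i \parweighti u_t^i = \sum_i \parweighti \hat{U}^i(\feasconsi)$, which is precisely (\ref{eq:coll_rat}). The main obstacle here is bookkeeping rather than conceptual difficulty: getting the index swap in (\ref{af_ineq}) correct so that the minimum in (\ref{eq:util}) is attained at the intended affine piece, and choosing $\parweighti = 1/\lambda_t^i$ so that the per-agent price terms telescope against the single aggregate budget constraint.
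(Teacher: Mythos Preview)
Your argument is correct and is precisely the standard Afriat construction that the paper's one-line proof invokes by citation; you have simply unpacked what ``follows immediately by application of Afriat's Theorem'' means, including the key choice $\parweighti = 1/\lambda_t^i$ that makes the per-agent supergradient bounds telescope against the single aggregate budget. Aside from a harmless index slip (the $s$-th affine piece has gradient $\lambda_s^i \alpha_s$, not $\lambda_s^i \pricet$), the bookkeeping is right and nothing is missing.
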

 \begin{proof}
 \vspace{-0.2cm}
     This follows immediately by application of Afriat's Theorem \cite{afriat1967construction}.
 \end{proof}

\begin{comment}
 Thus the existence of "rationalizing" utility functions $\{U^i\}_{i=1}^M$ can be checked by checking the feasibility of inequalities~\eqref{af_ineq}. Furthermore, these inequalities exactly coincide with the classical linear inequalities of Afriat's Theorem \cite{afriat1967construction}. So by the following Theorem, provided there exists a feasible solution to the inequalities~\eqref{af_ineq}, we can reconstruct a rationalizing individual utility function for each member $i \in \{1,\dots,M\}$:

 \begin{theorem}{\cite{afriat1967construction}}
 \label{thm:afriat_constr}
    Given a dataset $\dataset = \{\pricet, \cons, \{\asconsi\}_{i=1}^M, \ t \in \{1,\dots,T\} \}$, the following are equivalent:
    \begin{enumerate}
        \item For each member $i \in \{1,\dots,M\}$, there exist feasible personalized quantities $\feasconsi$ and numbers $u_j^i > 0, \lambda_j^i > 0$ such that for all $s,t \in \{1,\dots,T\}$, \eqref{af_ineq} holds. 
        \item Explicit monotone and concave utility functions that rationalizes the dataset by satisfying the condition of \textbf{coordination} given in Definition~\ref{coll_rat} are given by
            \begin{equation}
            \label{util}
                \hat{U}^i(\beta) = \min_{t \in \{1,\dots,N\}} \{u_t^i + \lambda_t^i \alpha_t'(\beta - \beta_t)\}
            \end{equation}
    \end{enumerate}
 \end{theorem}

 \begin{proof}
    This result follows immediately from Afriat's Theorem \cite{afriat1967construction}.
 \end{proof}

 \end{comment}
 
 Thus if the feasible region of \eqref{af_ineq} is nonempty, then monotone concave rationalizing utility functions can be reconstructed using \eqref{eq:util}.

 Implementing this feasibility test \eqref{af_ineq} in practice presents a problem however, because the set $\{\feasconsi\}_{i=1}^M$ is \textit{unobserved}. To overcome this, \cite{cherchye2011revealed} present an equivalent mixed-integer linear program (MILP) formulation which allows for practical computation of this coordination test.

 \begin{theorem}{\cite{cherchye2011revealed}}
 \label{MILP}
 Let $\dataset$ be a set of observations. There exists a set of M concave and continuous utility functions $U^1,\dots,U^M$ such that $\dataset$ is consistent with coordination if and only if there exist $\feasconsi \in \posreals^N$, $\eta_t^i \in \posreals$, and $x_{st}^i \in \{0,1\}, i=1,\dots,M$ that satisfy
 \vspace{-0.1cm}
 \begin{enumerate}[label=\roman*)]
    \item $\sum_{i=1}^M\feasconsi = \cons$ and $\feasconsi \geq \asconsi$
    \vspace{-0.2cm}
    \item $\eta_t^i = \pricet'\feasconsi$
    \vspace{-0.2cm}
    \item $\eta_s^i - \pricet'\feasconsi < y_s x_{st}^i$
    \vspace{-0.2cm}
    \item $x_{su}^i + x_{ut}^i \leq 1 + x_{st}^i$
    \vspace{-0.2cm}
    \item $\eta_t^i - \pricet'\feasconsi \leq y_t(1-x_{st}^i)$
 \end{enumerate}
 \vspace{-0.1cm}
 where $y_t = \pricet'\cons$ is the group consumption at time $t$.

 \end{theorem}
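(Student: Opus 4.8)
The plan is to prove the claimed MILP feasibility is equivalent to the Afriat-inequality feasibility of Theorem~\ref{thm:cherchye1}, and hence, by that theorem, to consistency with \emph{coordination}. Since Theorem~\ref{thm:cherchye1} already reduces coordination to the existence of feasible personalized quantities $\feasconsi$ together with positive Afriat numbers $u_t^i,\lambda_t^i$ satisfying~\eqref{af_ineq} for each member $i$, it suffices to show that, for fixed feasible personalized quantities, the system~\eqref{af_ineq} is solvable if and only if the binary conditions (iii)--(v) admit a consistent assignment of the $x_{st}^i$. The full statement then follows by treating the continuous personalized quantities $\feasconsi$ and the binary variables $x_{st}^i$ as joint decision variables of one feasibility program.

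The first step I would record is the classical equivalence (Afriat's theorem, via \cite{afriat1967construction}) that, for a single agent facing price--quantity pairs $\{(\pricet,\feasconsi)\}_{t=1}^T$, the Afriat inequalities~\eqref{af_ineq} are solvable in positive $u_t^i,\lambda_t^i$ if and only if that agent's data satisfies the Generalized Axiom of Revealed Preference (GARP). This is the crucial move: it eliminates the bilinear coupling $\lambda_t^i\,\pricet'\feasconsi$ in~\eqref{af_ineq}, which involves the unknown multipliers, replacing it by a purely order-theoretic condition on expenditures alone.

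Next I would exhibit the encoding of GARP by the binary relation $x_{st}^i$, interpreted as ``$q_s^i$ is revealed preferred to $q_t^i$ for agent $i$.'' Setting $\eta_t^i=\pricet'\feasconsi$ (condition (ii)) names each agent's expenditure, and because $\feasconsi\le\cons$ componentwise we have $0\le\eta_t^i\le y_t$, so $y_s,y_t$ are valid big-$M$ constants. Condition (iii), through its strict inequality and right-hand side $y_s\,x_{st}^i$, forces $x_{st}^i=1$ precisely when bundle $s$ directly reveals a weak preference over bundle $t$; condition (iv) is exactly transitivity of the relation $x^i$; and condition (v), with the complementary multiplier $(1-x_{st}^i)$, rules out the reverse strict revealed preference whenever $x_{st}^i=1$. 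Taken together, (iii)--(v) assert that the transitive closure of the direct revealed-preference relation contains no strict cycle, which is GARP.

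With these ingredients the two directions are routine. Given a coordination-consistent $\dataset$, Theorem~\ref{thm:cherchye1} supplies feasible $\feasconsi$ whose induced data obeys GARP for each $i$; reading off the revealed-preference relation as $x_{st}^i$ and the expenditures as $\eta_t^i$ yields an assignment satisfying (i)--(v). Conversely, a MILP solution produces, for each $i$, feasible personalized quantities whose expenditure data respects GARP, so Afriat's theorem returns $u_t^i,\lambda_t^i$ solving~\eqref{af_ineq}, and Theorem~\ref{thm:cherchye1} then delivers coordination. I expect the main obstacle to be the careful bookkeeping of strict versus weak inequalities, namely matching the strict ``$<$'' of (iii) against the weak ``$\le$'' of (v) so that the essential asymmetry of GARP (a strict direct preference can never be overturned) is captured exactly, together with verifying that $y_s$ and $y_t$ are tight enough big-$M$ bounds to render the conditional constraints neither vacuous nor over-restrictive.
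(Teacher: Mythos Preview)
Your proposal is correct and follows exactly the route the paper defers to: the paper's own proof is simply a citation to Proposition~2 of \cite{cherchye2011revealed}, and your sketch---reduce to Theorem~\ref{thm:cherchye1}, replace the Afriat inequalities for each member by GARP via Afriat's theorem, then encode GARP with the binary indicators $x_{st}^i$ using (iii)--(v) as big-$M$ conditional constraints with transitivity---is precisely the argument of that proposition. One small imprecision worth tightening: condition~(iii) does not force $x_{st}^i=1$ \emph{precisely} when $s$ is directly revealed preferred to $t$, but only \emph{whenever} it is (the converse is not forced), so that together with (iv) the relation $x^i$ contains the transitive closure of direct revealed preference, after which (v) is exactly the ``no strict reversal'' clause of GARP.
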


 \begin{proof}
    See Proposition 2 of \cite{cherchye2011revealed}.
 \end{proof}

\section{Testing for Cognitive Radar Network Coordination}
\label{sec:Rad_coord}
\begin{comment} A cognitive radar dynamically adapts its waveform output, through the ambiguity function, in response to received signals that reveal information about the target of interest or environment, in order to maximize a utility. 
\end{comment}
\begin{comment}
Previous work \cite{krishnamurthy2020identifying} has investigated the use of Unitary Revealed Preferences \cite{afriat1967construction} for detecting a single utility-maximizing cognitive radar, by interpreting the spectra of the target state-noise covariance matrix and the radar measurement noise covariance matrix as micro-economic price and consumption vectors, respectively. In this work we extend this interpretation to the multi-radar regime.
\end{comment}
Here we test for Pareto optimal power allocation among $M$ radars, and subsequently reconstruct each radar's utility function, from a finite dataset of intercepted radar emissions. We take a target-centric view and assume that the target can measure the power of the radar network signals that are being output in response to its maneuvers. 
\begin{figure}
\centering
  \includegraphics[width=\linewidth,scale=0.2]{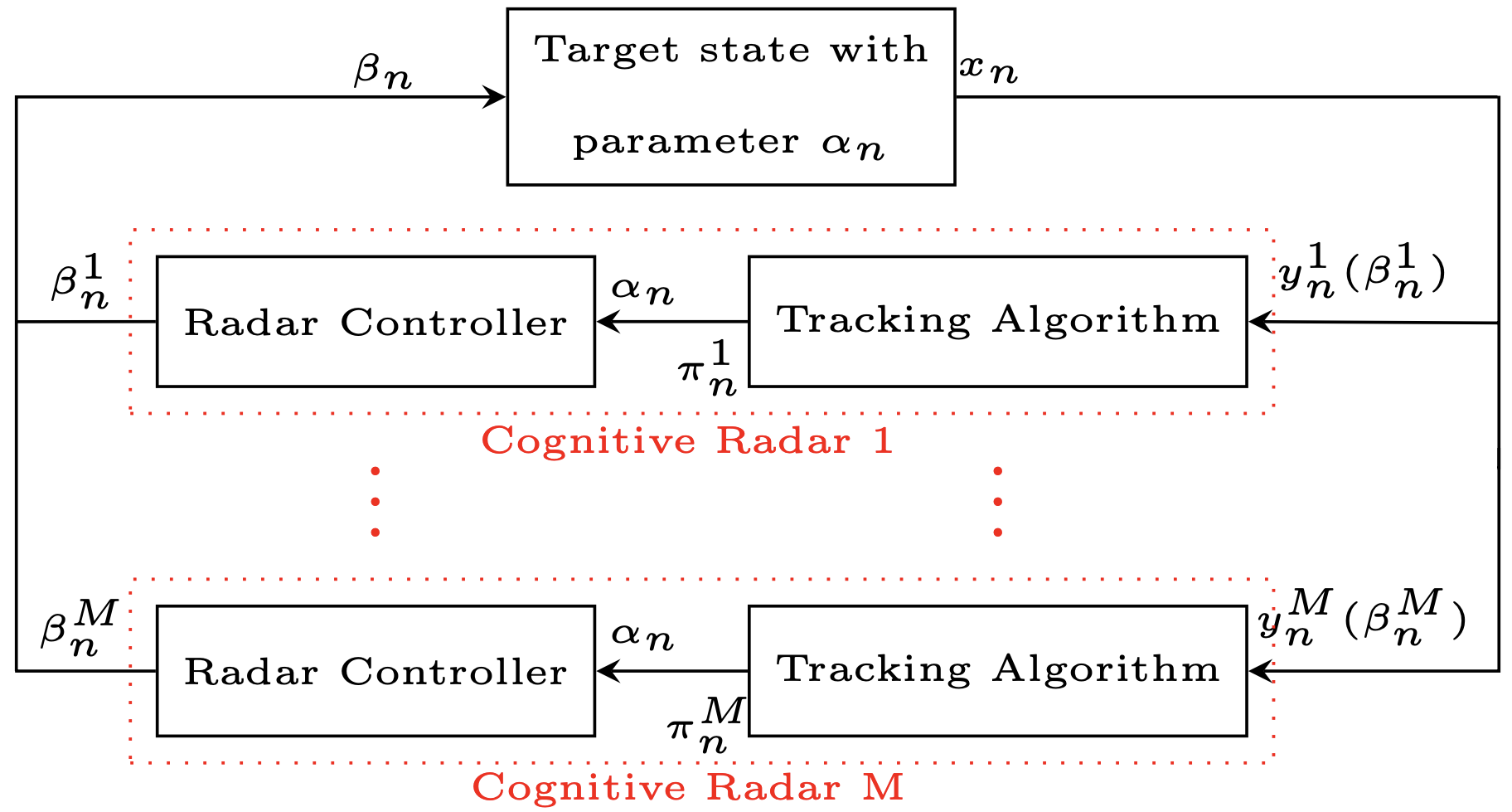}
  \caption{\scriptsize Interaction of our dynamics with the adversary's cognitive radar network. Each cognitive radar is comprised of a Bayesian tracker and a radar controller. Based on the time series $\{\alpha_n,  \beta_n, \hat{\beta}_n^1, \dots,\hat{\beta}_n^M\}_{n=1}^T$, our goal is to determine if the radar network is coordinating, i.e., if the true responses $\{\beta_n^i, n=1,\dots,T\}_{i=1}^M$ satisfy coordination as in \eqref{eq:coll_rat}.}
  \label{fig:blockdiag}
\end{figure}

\vspace{-0.3cm}
\subsection{Waveform Adaptation by Cognitive Radar Network}

\vspace{-0.2cm}
\subsubsection{Cognitive Radar Network Response Model}
\vspace{-0.2cm}
Here we outline the abstract interaction between "us" (target) and the network of $M$ cognitive radars. Let $n=1,2,\dots$  denote discrete time. Our probe signal (modeling acceleration maneuvers) is $\alpha_n \in \reals^N$ and the response of radar $i$ is $\beta_n^i \in \reals^N$. The Pareto optimal interaction dynamics are as follows:
\begin{enumerate}[label=\roman*)]
\vspace{-0.2cm}
\item our state $x_n \sim p_{\alpha_n}(x|x_{n-1}), x_0 \sim \pi_0$
\vspace{-0.2cm}
\item radar $i$ observation $y_n^i \sim p_{\beta_n^i}(y|x_n)$
\vspace{-0.2cm}
\item radar $i$ tracker $\pi_n^i = T(\pi_{n-1}^i, y_n^i)$ 
\vspace{-0.2cm}
\item radar $i$ output $\beta_n^i$, where 
\vspace{-0.5cm}
 \begin{equation}
\label{MOO}
\{\beta_n^i\}_{i=1}^M \in \textrm{argmax}_{\{\zeta_i\}_{i=1}^M : \alpha_n (\sum_{i=1}^M \zeta_i) \leq C} \sum_{i=1}^M \mu_n^i U^i(\zeta_i)
\end{equation}
and $C$ is the radar network power constraint
\vspace{-0.2cm}
\item aggregate radar network output $\beta_n = \sum_{i=1}^m \beta_n^i$
\end{enumerate}
These interaction dynamics can be seen in Figure~\ref{fig:blockdiag}. The interpretation of $\alpha_n$ is as a parameter which influences the state transition dynamics of the target (us). This will soon be specified in terms of the eigenvalue spectrum of the state-noise covariance matrix. The coordinating radar network output $\{\beta_n^i\}_{i=1}^M$ lies on the Pareto Front corresponding to the multi-objective optimization problem $\textrm{argmax}_{\{\zeta_i\}_{i=1}^M} \sum_{i=1}^M \mu_n^i U^i(\zeta_i)$, with constraint $\alpha_n (\sum_{i=1}^M \zeta_i) \leq C$ and Pareto weights $\{\mu_n^i\}_{i=1}^M$. At an abstract level, the observations of the $i$'th radar, $y_n^i$, are parameterized by its output signal $\beta_n^i$. We will soon specify this interpretation in terms of the eigenvalue spectrum of the measurement noise covariance matrix. 
\vspace{-0.3cm}
\subsubsection{Linear Gaussian Target Model and Radar Tracker}
Linear Gaussian dynamics for a target's kinematics \cite{li2003survey} and linear Gaussian measurements at each radar are widely assumed as a useful approximation \cite{bar2004estimation}. Thus we will consider the following linear Gaussian state dynamics and measurements, as a special case of the general network response model of the previous subsection:
\begin{align}
    \begin{split}
    \label{lin_gaus}
        & x_{n+1} = A x_n + w_n(\alpha_n), \  x_0 \sim \pi_0 \\
        & y_n^i = C x_n + v_n^i(\beta_n^i) 
    \end{split}
\end{align}
Here we represent the target (our) state as $x_n \in \reals^N$, with initial density $\pi_0 \sim \gaussN(\hat{x}_0,\Sigma_0)$. Cognitive radar $i$ obtains measurements $y_n^i \in \reals^N$. State and measurement noise processes are given by $w_n \sim \gaussN(0,Q(\alpha_n)), v_n^i \sim \gaussN(0,R(\beta_n^i))$, respectively, and we assume $\{w_n\}, \{v_n^1\},\dots,\{v_n^M\}$ are independent i.i.d. processes. Note that we implement the dependence of the dynamics on signal $\alpha_n$ and response $\{\beta_n^i\}$ through the state and observation noise covariance matrices $Q$ and $R$, respectively. A radar's control of its ambiguity function can be mapped to a control of the measurement noise covariance matrix $R$. However, reducing the measurement noise covariance incurs several costs to the system, such as increased power output. In our setting the cost constraint is interpreted as a bound on the total available power to be allocated across the $M$ radars. The target's probing of the radars is performed via purposeful maneuvers that result in modulation of the state noise covariance matrix $Q$ in \eqref{lin_gaus} by $\alpha_n$.

\vspace{-0.3cm}

\subsubsection{Testing for Coordination among Cognitive Radars}
We now show how Theorem~\ref{thm:cherchye1} can be used to determine if the cognitive radar network is coordinating, i.e. the responses satisfy coordination in the sense of Definition~\eqref{coll_rat}. What remains is to specify the structure of the process by which we (the target) can acquire information about the radar network signal output. Specifically, suppose:
\begin{enumerate}[label=\roman*)]
    \vspace{-0.25cm}
    \item Our probe $\alpha_n$, characterizing our maneuvers, is the vector of eigenvalues of the positive definite matrix $Q$ governing the state-noise covariance in \eqref{lin_gaus}.
    \vspace{-0.25cm}
    \item The response $\beta_n^i$ of the $i$'th radar is the vector of eigenvalues of the positive definite matrix $R^{-1}$ governing the noise covariance in \eqref{lin_gaus}.
    \vspace{-0.25cm}
    \item $\{\beta_n^i\}_{i=1}^M$ is chosen to be Pareto optimal with respect to individual utilities $\{U^i\}_{i=1}^M$ and Pareto weights $\{\mu_n^i\}_{i=1}^M$, and subject to the constraint $\alpha_n (\sum_{i=1}^M \beta_n^i) \leq 1$, i.e. chosen according to \eqref{MOO}. 
    \vspace{-0.25cm}
    \item We (the target) can observe the aggregate output signal power across all $M$ radars $\beta_n = \sum_{i=1}^M \beta_n^i$ through a broadbeam omni-directional receiver configuration.
    \vspace{-0.25cm}
    \item For each radar we observe an associated signal power $\hat{\beta}_n^i$ which is upper-bounded by the true signal power $\beta_n^i$, i.e. $\hat{\beta}_n^i \leq \beta_n^i$ . This is a reasonable assumption that can arise from a  narrowbeam receiver that steers to one radar at a time, and thus only observes some fraction of the power output of each radar.
\end{enumerate}
\vspace{-0.2cm}
Thus we obtain the dataset 
\vspace{-0.2cm}
\begin{equation}
\label{obs_data}
    \dataset = \{\alpha_n, \beta_n , \{\hat{\beta}_n^i\}_{i=1}^M, n=1,\dots,T\}
\end{equation}
with $\alpha_n, \beta_n, \hat{\beta}_n^i \in \reals^N$. Then we can directly use the MILP from Theorem~\ref{MILP} to test for coordination, and if this MILP has a non-empty feasible region then we can employ Lemma~\ref{lem:util} to reconstruct feasible utility functions for each of the $M$ cognitive radars.

\vspace{-0.3cm}

\section{Numerical Example: Tri-Radar Spectral Coordination Test}
\label{sec:num_ex}
\vspace{-0.3cm}
Here we demonstrate the use of the MILP of Theorem \ref{MILP} to test for Pareto optimal power allocation of a network of $M=3$ radars. For ease of visualization, we choose $N = 2$ so that the probe and response signals $\alpha_n, \beta_n^i \in \reals^2$ and thus the utility functions $U^i(\cdot)$ can be displayed on a 2-d contour plot. The elements of our probe signal $\alpha_n$ are chosen randomly and independently over time $n$ as $\alpha_n(1) \sim \textrm{Unif}(0.1,1.1)$ and $\alpha_n(2) \sim \textrm{Unif}(0.1,1.1)$ where Unif$(a,b)$ denotes the uniform pdf with support $(a,b)$. Our probe signal parameterizes the state-noise covariance matrix as $Q_n = \textrm{diag}[\alpha_n(1),\alpha_n(2)]$ in \eqref{lin_gaus}. Suppose the utility functions of the radars are $U^1(\beta) = \textrm{det}(R^{-1}(\beta)) = \beta(1) \times \beta(2), \ U^2(\beta) = \textrm{Tr}(R^{-1}(\beta)) = \beta(1) + \beta(2), \ U^3(\beta) = \sqrt{\beta(1)} \times \beta(2)$.  The distributed response $\{\beta_n^i\}_{i=1}^3$ is then chosen at each time $n$ such that \eqref{MOO} is satisfied, with $M=3$, $C=1$, $\mu_n^1 = 0.3, \mu_n^2 = 0.3, \mu_n^3 = 0.4 \ \forall n \in \{1,\dots,T\}$. This simulation is run for $T=10$ steps. 

We simulate the sensor-level measurement to obtain the dataset \eqref{dataset} in the following way: the aggregate consumption $\beta_n = \sum_{i=1}^3 \beta_n^i$ is directly measured for all $n=1,\dots,T$, and the assignable quantities are obtained as $\beta_n^i = S \times \beta_n^i,$ where $S \sim \textrm{Unif}(0.1,1)$. Using this dataset, we have verified\footnote{Matlab's intlinprog function is used to test the linear program} that the MILP of Theorem~\ref{MILP} has a feasible solution\footnote{There exists $\{q_t^i \in \reals^N_+, \eta_t^i \in \reals_+, x_{st}^i \in \{0,1\}, i=1,2,3\}_{s,t=1}^{10}$ such that conditions $i-v$ of Theorem~\ref{MILP} hold.}, indicating that the data is consistent with Pareto optimality, or coordination. Then taking a resultant feasible set $\{q_t^i, i=1,2,3\}_{t=1}^{10}$, we use \eqref{eq:util} to reconstruct feasible utility functions which rationalize the data. These reconstructed utility functions are shown in Fig.~\ref{fig:utilities}, where plots~\ref{fig:T_util1}, \ref{fig:T_util2}, \ref{fig:T_util3} are the true utility functions of each radar and plots~\ref{fig:util1}, \ref{fig:util2}, \ref{fig:util3} are the reconstructed utility functions for each respective radar, given the dataset $\dataset$. \\
\vspace{-0cm}
Conversely, we have verified that the test of Theorem~\ref{MILP} is capable of rejecting non-cooperative data. Specifically, we tested the MILP 100 times with each $\beta_n^i$ generated independently (without coordination) as a two-dimensional uniform random variable, $\beta_n^i = [\beta_n^i(1),\beta_n^i(2)]$ where $\beta_n^i(k) \sim \textrm{Unif}(0,1), k=1,2$. Out of these 100 tests, 5 resulted in a non-empty feasible region, indicating coordination is present, and 95 resulted in an empty feasible region, indicating the absence of coordination. Thus we observe a Type-II error (conclude that the radars are coordinating when in fact they are acting independently) rate of 5\%.

\begin{figure}[t]
%\label{fig:utilities}
        \centering
        \begin{subfigure}[]{0.23\textwidth} %.eps: 0.23\textwidth
            \centering
            \includegraphics[width=\textwidth]{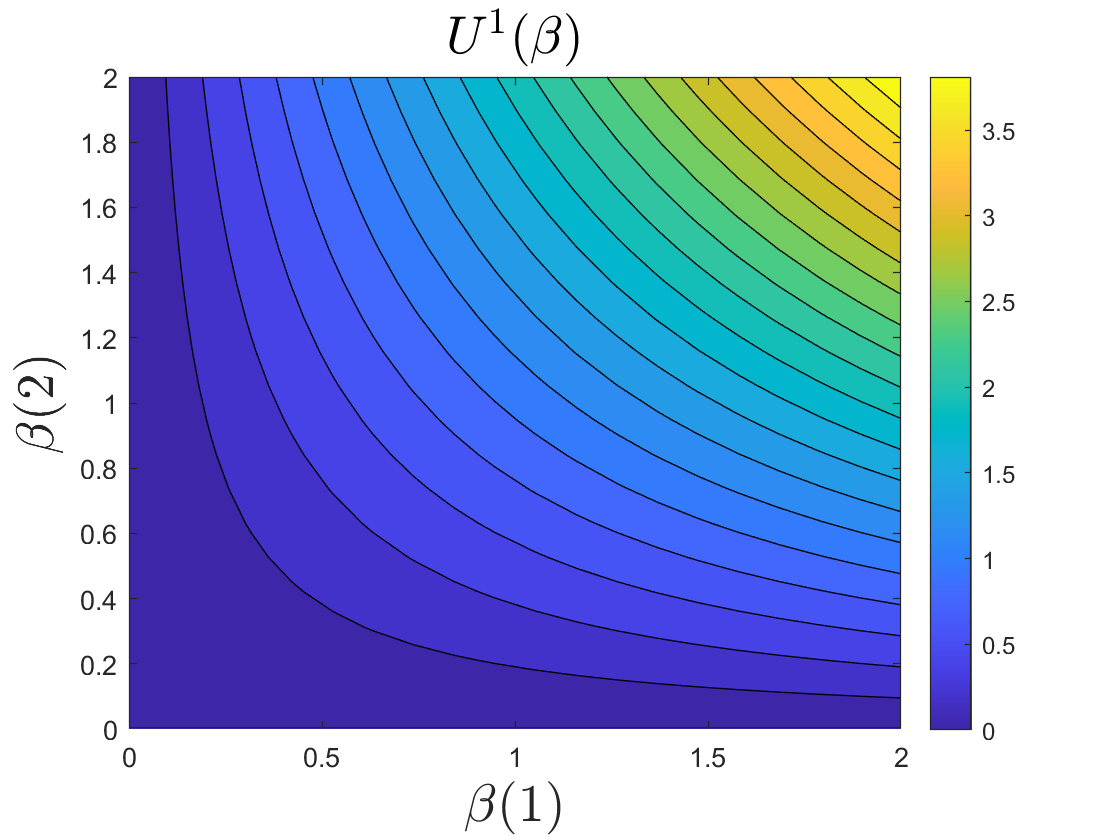}
            \caption[]%
            {{\scriptsize $U^1(\beta) = \textrm{det}(R^{-1}(\beta))$}}    
            \label{fig:T_util1}
        \end{subfigure}
        \hfill
        \begin{subfigure}[]{0.23\textwidth}  
            \centering 
            \includegraphics[width=\textwidth]{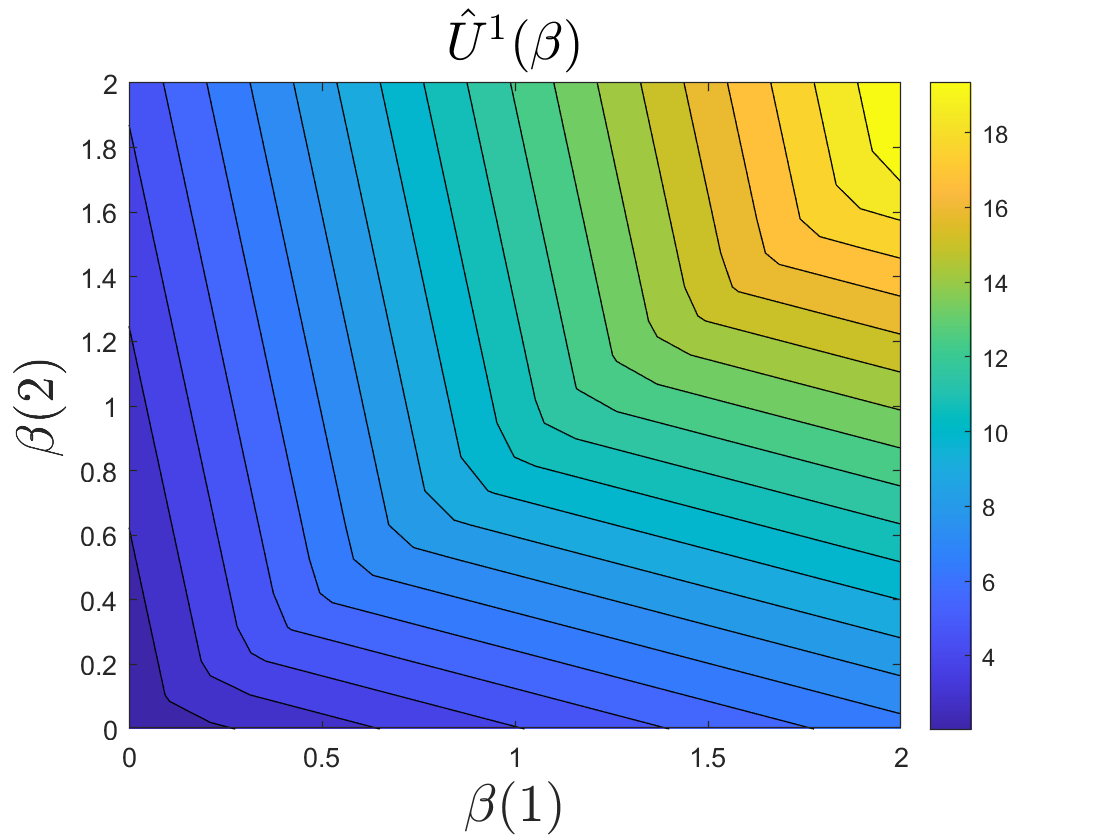}
            \caption[]%
            {{\scriptsize $\hat{U}^1(\beta)$}}    
            \label{fig:util1}
        \end{subfigure}
        \vskip\baselineskip

        \begin{subfigure}[]{0.23\textwidth}   
            \centering 
            \includegraphics[width=\textwidth]{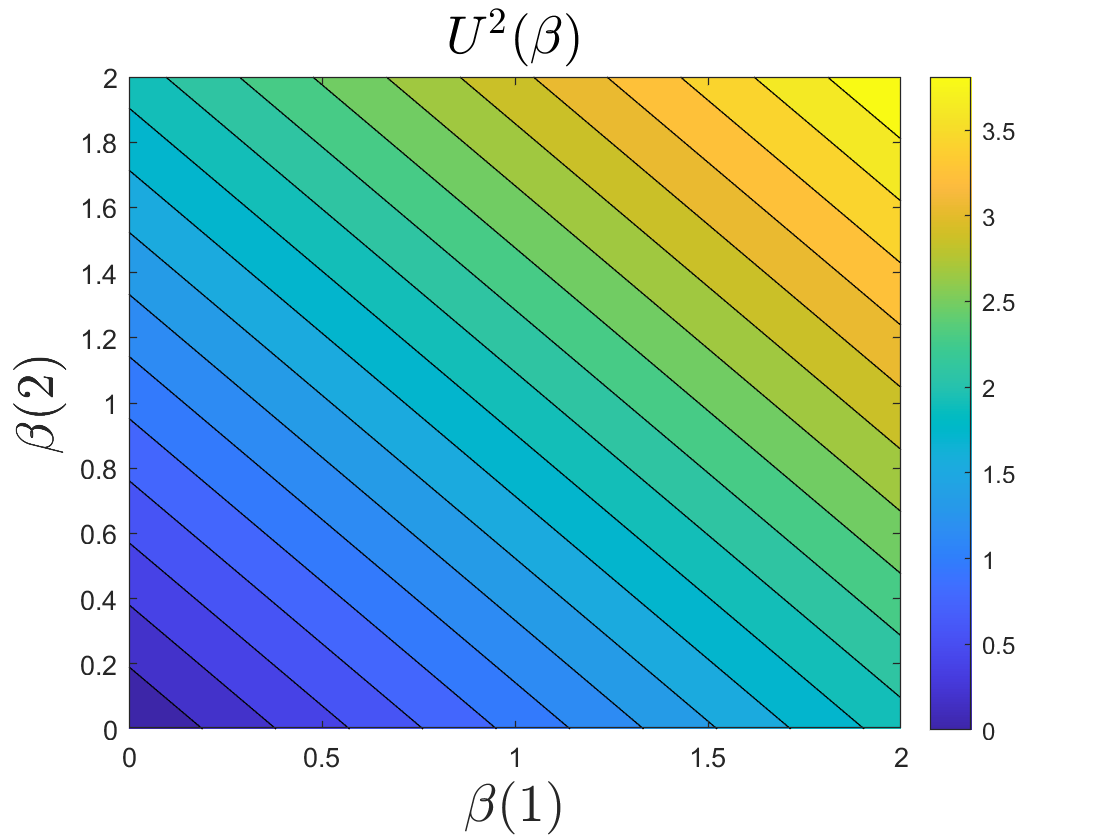}
            \caption[]%
            {{\scriptsize $U^2(\beta) = \textrm{Tr}(R^{-1}(\beta))$}}    
            \label{fig:T_util2}
        \end{subfigure}
        \hfill
        \begin{subfigure}[]{0.23\textwidth}   
            \centering 
            \includegraphics[width=\textwidth]{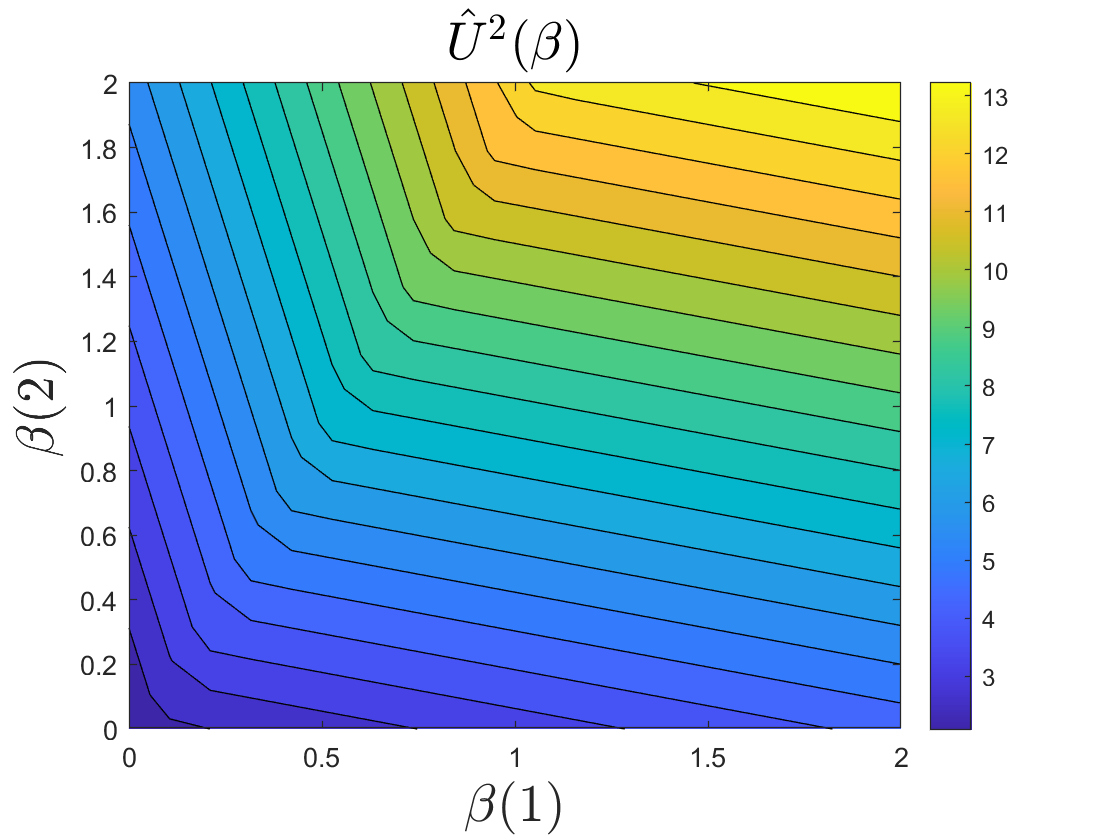}
            \caption[]%
            {{\scriptsize $\hat{U}^2(\beta)$}}    
            \label{fig:util2}
        \end{subfigure}
        \vskip\baselineskip

        \begin{subfigure}[]{0.23\textwidth}   
            \centering 
            \includegraphics[width=\textwidth]{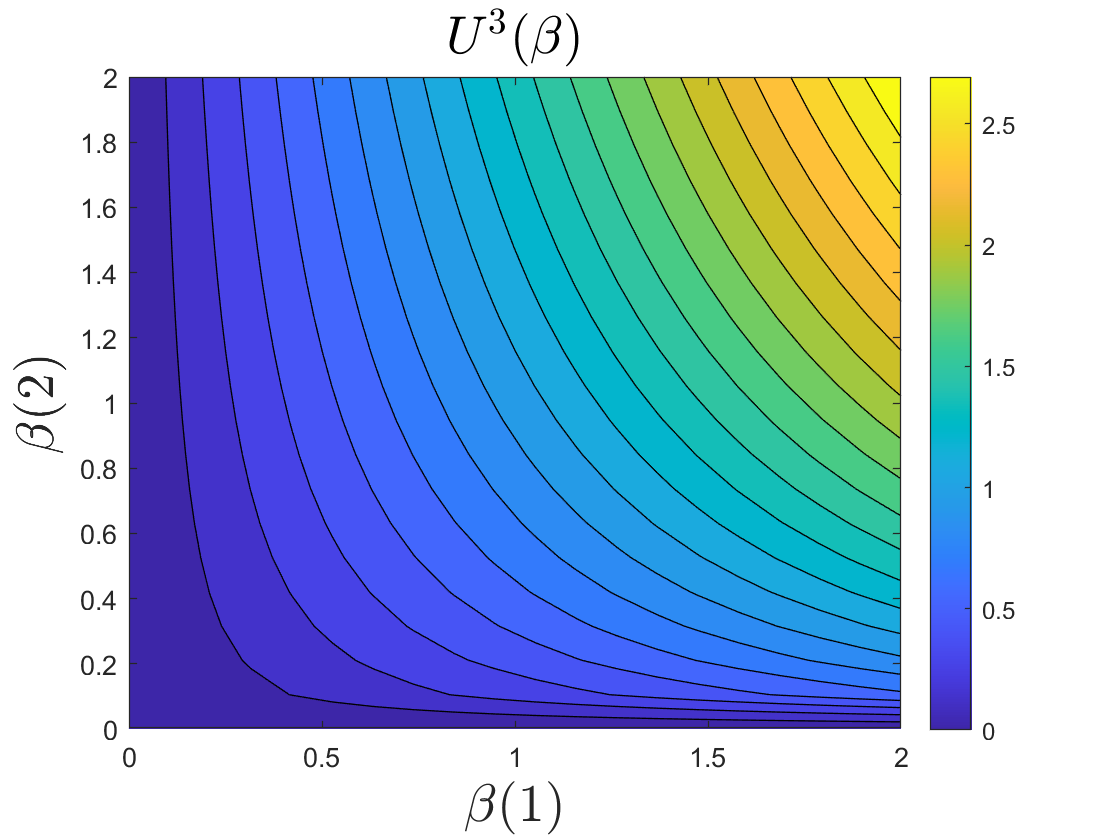}
            \caption[]%
            {{\scriptsize $U^3(\beta) = \sqrt{\beta(1)}\beta(2)$}}    
            \label{fig:T_util3}
        \end{subfigure}
        \hfill
        \begin{subfigure}[]{0.23\textwidth}   
            \centering 
            \includegraphics[width=\textwidth]{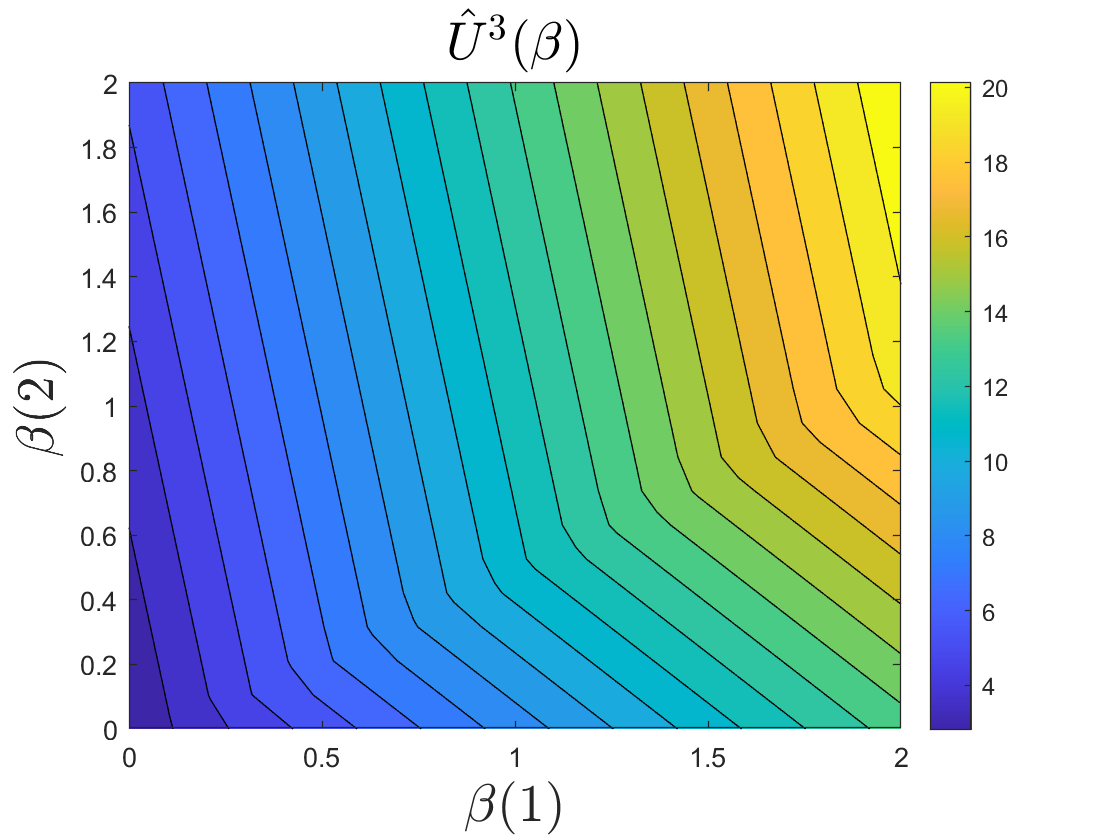}
            \caption[]%
            {{\scriptsize $\hat{U}^3(\beta)$}}    
            \label{fig:util3}
        \end{subfigure}
        \caption[]
        {\scriptsize $U^i(\beta)$ is the true utility function of the $i$'th radar, inducing the responses $\{\hat{\beta}_n^i\}_{n=1}^{10}$. Given the dataset $\dataset = \{\alpha_n, \hat{\beta}_n, \{\hat{\beta}_n^i\}_{i=1}^3\}_{n=1}^{10}$, the MILP \ref{MILP} has a non-empty feasible region. From this MILP we obtain the set $\{q_t^i\}_{t=1}^{10}$ and reconstruct $\hat{U}^i(\beta)$ via \eqref{eq:util} for each $i=1,2,3$.}
        \label{fig:utilities}
\end{figure}

\begin{comment}
\begin{figure}
     \centering
     \begin{subfigure}[b]{0.25\textwidth}
         \centering
         \includegraphics[width=\textwidth]{Rad1Util.eps}
         \caption{$y=x$}
         \label{fig:y equals x}
     \end{subfigure}
     \hfill
     \begin{subfigure}[b]{0.25\textwidth}
         \centering
         \includegraphics[width=\textwidth]{Rad2Util.eps}
         \caption{$y=3sinx$}
         \label{fig:three sin x}
     \end{subfigure}
     \hfill
     \begin{subfigure}[b]{0.25\textwidth}
         \centering
         \includegraphics[width=\textwidth]{Rad3Util.eps}
         \caption{$y=5/x$}
         \label{fig:five over x}
     \end{subfigure}
     \hfill
      \begin{subfigure}[b]{0.25\textwidth}
         \centering
         \includegraphics[width=\textwidth]{TrueRadUtil.eps}
         \caption{$y=5/x$}
         \label{fig:five over x}
     \end{subfigure}
        \caption{Three simple graphs}
        \label{fig:three graphs}
\end{figure}
\end{comment}

\vspace{-0.5cm}

\section{Conclusion}
\vspace{-0.3cm}
We have utilized the micro-economic framework of Revealed Preferences to test for Pareto optimality, indicating coordination, among a cognitive radar network. We have provided a numerical example which validates the proposed framework and illustrates the reconstructed utility functions of the individual cognitive radars.

\label{sec:conc}

\bibliographystyle{IEEEtran}
\bibliography{Bibliography.bib}

\end{document}